\DeclareMathAlphabet{\mathpzc}{OT1}{pzc}{m}{it}
\DeclareSymbolFontAlphabet{\amsmathbb}{AMSb}%
\newcommand{\GS}{\gamma^\star}
\declaretheorem{theorem}
\declaretheorem[sibling=theorem]{lemma}
\newcommand{\beq}{\begin{equation}}
\newcommand{\eeq}{\end{equation}}
\newcommand{\beqa}{\begin{eqnarray}}
\newcommand{\eeqa}{\end{eqnarray}}
\newcommand{\beqan}{\begin{eqnarray*}}
\newcommand{\eeqan}{\end{eqnarray*}}
\newcommand{\E}{\mathds{E} }
\newcommand\raiseT[2]{\raisebox{0.25ex}{$#1#2$}
}
\newcounter{l1}
\newcounter{l2}
\newcounter{l3}
\newcommand{\bdotlist}{\begin{list}{$\bullet$}{}}
\newcommand{\bboxlist}{\begin{list}{$\Box$}{}}
\newcommand{\bbboxlist}{\begin{list}{\raisebox{.005in}{{\tiny
$\blacksquare$ \ \ }}}{}}
\newcommand{\bdashlist}{\begin{list}{$-$}{} }
\newcommand{\blist}{\begin{list}{}{} }
\newcommand{\barablist}{\begin{list}{\arabic{l1}}{\usecounter{l1}}}
\newcommand{\balphlist}{\begin{list}{(\alph{l2})}{\usecounter{l2}}}
\newcommand{\bAlphlist}{\begin{list}{\Alph{l2}.}{\usecounter{l2}}}
\newcommand{\bdiamlist}{\begin{list}{$\diamond$}{}}
\newcommand{\bromalist}{\begin{list}{(\roman{l3})}{\usecounter{l3}}}
\renewcommand{\top}{{\mathpalette\raiseT\intercal}}
\title{
Harnessing Information in Incentive Design}
\author{Raj Kiriti Velicheti \qquad Subhonmesh Bose \qquad Tamer Ba\c{s}ar
\thanks{All authors are affiliated with the Department of Electrical and Computer Engineering and the Coordinated Science Laboratory at the University of Illinois Urbana-Champaign, Urbana, IL 61801. Emails: \{rkv4, boses, basar1\}@illinois.edu. This work was partially supported by the grant NSF ECCS 2349418 from the US National Science Foundation. }}
\begin{document}

\maketitle

\begin{abstract}
Incentive design deals with interaction between a principal and an agent where the former can shape the latter's utility through a policy commitment. It is well known that the principal faces an \textit{information rent} when dealing with an agent that has informational advantage. In this work,
we embark on a systematic study of the effect of information asymmetry in incentive design games. Specifically, we first demonstrate that it is in principal's interest to decrease this information asymmetry. To mitigate this uncertainty, we let the principal gather information either by letting the agent shape her belief (aka \textit{Information Design}), or by paying to acquire it. Providing solutions to all these cases we show that while introduction of uncertainty increases the principal's cost, letting the agent shape its belief can be advantageous. We study information asymmetry and information acquisition in both matrix games and quadratic Gaussian game setups.    
\end{abstract}
\newcommand{\PPP}{{$P$}}
\newcommand{\AAA}{{$A$}}










\section{Introduction}


In a variety of principal-agent interactions, a principal (\PPP{}) seeks to induce a desired response from an agent (\AAA{}), where the two parties may have different strategic objectives, e.g., see \cite{bacsar2024inducement}. Incentive design is a widely studied mechanism for inducement. It defines a hierarchical decision-making problem, where \PPP{} announces how she would react to \AAA{}'s action. \AAA{} then acts, after which \PPP{} implements her announced strategy. These interactions can be studied via feedback Stackelberg game setups. The feedback structure allows the principal to commit to a \emph{reaction} policy to \AAA{}'s action instead of an open-loop action policy at the start of the game.

A long line of work has studied stochastic incentive design games where the two parties may have different information about cost-relevant random variables, e.g., see \cite{bacsar1984affine,cansever1982minimum,zheng1982existence,salman1981incentive,8b51acce26a144b5949c35a7e184ada1,cansever1985optimum}. However, a common assumption in these setups has been that \PPP{} knows more than \AAA{}. Indeed, if the goal is to influence \AAA{} to act exactly how \PPP{} would have preferred \AAA{} to act to minimize the former's cost, it is conceivable that \PPP{} must know what \AAA{} knows. However, in a variety of contexts, such an assumption does not hold. 
Consider for instance the interaction between an insurance provider and a policyholder. Although risk-sharing is established upon signing a contract (the incentive mechanism) where the insurance provider seeks to influence the policyholder's actions through offering premium discounts for compliance with specific rules, the policyholder typically has more knowledge about their personal circumstances and risk-preference than the insurer does. In this paper, we consider an incentive design setup, where we allow \AAA{} to have \emph{informational advantage} and then study ways in which \PPP{} can harness information to better her performance from playing a Bayesian information design game with \AAA{}.




To concretely describe our setup, consider a baseline incentive design game $\mathfrak{G}_1$ between \PPP{} and \AAA{}, where \PPP{} takes action $u \in \mathcal{U}$ and \AAA{} takes action $v \in \mathcal{V}$, which yield them costs, $c_P(u,v; \theta)$ and $c_A(u,v; \theta)$, respectively. Here, $\theta \in \Theta$ is a cost-relevant random variable (often referred to as the state of the world) for which both players hold the common prior $\mu_0 \in \Delta(\Theta)$, the probability distribution over $\Theta$. Game $\mathfrak{G}_1$ proceeds as follows. 
\PPP{} first commits to a policy $\gamma(v;\theta)$ of how she will react to the agent's action $v$, when the state is $\theta$. Such a policy can constitute an action in $\mathcal{U}$ or a mixed strategy as a distribution over $\mathcal{U}$. 
Then, the state $\theta$ is realized and \AAA{} observes $\theta$, takes an action $v \in \mathcal{V}$, and \PPP{} reacts to $v$ via the committed policy $\gamma(v, \theta)$. This game is of feedback Stackelberg-type, where instead of choosing a strategy to first act, \PPP{} commits to a policy to react to an action taken by \AAA{}. The committed policy can be thought of as a rule-book that \PPP{} proposes to induce a desired response from \AAA{}. Such games have applications in the design of tax codes and environmental taxes, among others. See \cite{groves1973incentives,bacsar1998dynamic,fudenberg1991game} for a survey. 

The Stackelberg equilibrium in $\mathfrak{G}_1$ can be described as follows. \PPP{} solves for $\gamma^\star(\cdot; \theta)$ via
\begin{align}
    \underset{\gamma(\cdot; \theta)
    }{\text{minimize}} \;  \E[c_P(\gamma(v^\star(\gamma;\theta); \theta), v^\star(\gamma;\theta); \theta)],
\end{align}
where \AAA{} solves for $v^\star(\gamma;\theta)$ through
\begin{align}
    \underset{v(\gamma; \theta) 
    }{\text{minimize}} \; \E[c_A( \gamma(v; \theta), v; \theta)],
\end{align}
assuming $v^\star$ is unique.\footnote{We sidestep problems introduced by non-unique responses from \AAA{} throughout and relegate a thorough investigation of the same to future work.}
In this game, \PPP{} and \AAA{} have access to the exact realization of $\theta$ and hence can condition their policy on it. Our object of study in this paper entails variants of $\mathfrak{G}_1$, where \PPP{} and \AAA{} have different information about $\theta$. Specifically, we consider three different classes of games, $\mathfrak{G}_2$, $\mathfrak{G}_3$, and $\mathfrak{G}_4$. In $\mathfrak{G}_2$, we let \AAA{} know the realization of $\theta$, but not \PPP{}. While in this game we retain the incentive design advantage for \PPP{}, we give informational advantage to \AAA{}. Not knowing $\theta$, \PPP{} can only solve a Bayesian version of a feedback-Stackelberg equilibrium. As one would expect, \PPP{} will incur a higher (expected) cost in an equilibrium of the Bayesian game compared to that in $\mathfrak{G}_1$. The difference is indeed the price \PPP{} pays for not knowing $\theta$ in this game of asymmetric information.

Can \PPP{} acquire information about $\theta$ and garner better costs on an equilibrium path? We address this question in this paper, by considering two different mechanisms of information acquisition in $\mathfrak{G}_3$ and $\mathfrak{G}_4$. In $\mathfrak{G}_3$, we allow \AAA{} to report a signal that is correlated with $\theta$ and let \PPP{} update her belief about $\theta$. However, \AAA{} can act strategically in how it designs the signal to persuade \PPP{}. In a sense, $\mathfrak{G}_3$ embodies an example where \emph{Bayesian persuasion} from \cite{kamenica2011bayesian} meets incentive design games in \cite{bacsar1984affine}. Here, the information-rich player (\AAA{}) sends a signal and seeks to influence how the information-poor player (\PPP{}) would design its incentive policy. As will become clear from our examples, \AAA{} acting purely out of self-interest can choose to reveal information about $\theta$ to \PPP{}, allowing \PPP{} to incur a cost lower than in $\mathfrak{G}_2$. Our setup is more complex than usual Bayesian persuasion problems considered in \cite{kamenica2011bayesian} because we accommodate incentive design as the reaction of the receiver, which in our problem is \PPP{}.

In $\mathfrak{G}_4$, we consider a setup where \PPP{} invests in devising an experiment or a signaling mechanism. This mechanism allows \PPP{} to refine its knowledge about $\theta$ prior to interacting with \AAA{}. However, the experiment is not free and \PPP{} must pay more for an experiment that reveals \emph{more information} (suitably defined) about $\theta$. While our analysis of $\mathfrak{G}_4$ bears similarity to  the costly Bayesian persuasion problem studied in \cite{gentzkow2014costly}, our setup is fundamentally different in that the information-poor player (\PPP{}) invests in an experiment design and not the other way around.

Our key contribution is the formulation and study of information acquisition in a principal-agent interaction, where \PPP{} enjoys the incentive design advantage, while \AAA{} possesses information advantage. Before delving into our analysis, we further contextualize this study within the long literature on games of asymmetric information that has a long history dating back to \cite{aumann1995repeated}. Within the context of incentive design, the role of asymmetric information in open-loop Stackelberg games has been addressed in \cite{xu2016signaling}. To our knowledge, this is the first work that studies information asymmetry and information acquisition in feedback Stackelberg matrix games--a type of interaction that more accurately models rule-book design. While Bayesian persuasion itself has been studied in matrix game contexts in \cite{kamenica2011bayesian,dughmi2016algorithmic}, this work attempts to merge two different types of inducements--belief shaping and incentive design. In addition, the last game $\mathfrak{G}_4$ borrows tools and techniques from the persuasion literature to address a question where \PPP{} chooses to invest in shaping her own belief before she interacts with \AAA{}.

Departing from the world of finitely many actions and states, we consider the same set of questions $\mathfrak{G}_1$--$\mathfrak{G}_4$ in quadratic Gaussian (QG) games. In such games, direct computation of a Stackelberg equilibrium in stochastic incentive design problems have proven challenging. An indirect solution method leveraging a suitable team problem has been studied in \cite{bacsar1984affine}. However, due to the non-classical information structure in problems we consider, this route is rendered inapplicable. As a result, our study of Bayesian variants with information acquisition techniques in QG games with a suitable restricted class of incentive policies stands out as an important contribution of this work. Again, while Bayesian persuasion in QG games have received recent attention, e.g., in \cite{tamura2018bayesian,sayin2020persuasion,velicheti2025value}, the combination of persuasion and incentive design in QG games remains unexplored. These games provide a powerful modeling framework for inducement with continuous state/action spaces. Furthermore, the incentive policies we consider are \emph{soft} in that they depend continuously on \AAA{}'s response\cite{bacsar2024inducement}. In practical contexts, such soft policies are typically more defensible than threat policies.

Within the context of Principal-agent interactions, the effect of asymmetric information was studied under the name of \textit{adverse selection} starting from \cite{mussa1978monopoly,akerlof1978market,milgrom1990rationalizability} with follow-up works advocating reduction of such information asymmetry through phenomena such as signaling \cite{spence1978job}. For a more detailed exposition of this literature see, \cite{bolton2004contract}. However, in contrast to these works we deal with a different kind of information disclosure. In our work \AAA{} can \textit{commit} to a disclosure mechanism similar to Bayesian Persuasion \cite{kamenica2011bayesian}. Recently, \cite{roesler2017buyer,bergemann2015limits,ali2023consumer} have addressed such verifiable disclosure in principal-agent interactions similar to our work, concentrating on specific cost structures such as monopolistic pricing. \cite{bergemann2022screening} study the setting where \PPP{} has both information advantage and can can design quality differentiated products.



\cite{bergemann2022screening} study the setting where \PPP{} has both information advantage and can can design quality differentiated products. 

The paper is organized as follows.
We begin by formalizing the games $\mathfrak{G}_2$--$\mathfrak{G}_4$ in Section \ref{sec:G2.G4}. We study $\mathfrak{G}_1$--$\mathfrak{G}_4$ in matrix games in Section \ref{Sec: MatrixGames}, followed by that in QG games  in Section \ref{sec:QG}. We conclude the paper in Section \ref{sec:conc}.



\section{Formalizing Incentive Design Games with Information Asymmetry and Acquisition}
\label{sec:G2.G4}
Borrowing the notation from $\mathfrak{G}_1$, we begin by presenting the Bayesian feedback Stackelberg game $\mathfrak{G}_2$. 
In $\mathfrak{G}_2$, \PPP{} does not know $\theta$, but only knows
$\mu_0 \in \Delta(\Theta)$, while \AAA{} sees the realization of $\theta$. In this case, the principal loses the ability to make the incentive policy contingent on $\theta$, but rather commits to a policy of the form, $\gamma(v)$ that is a pure or mixed strategy to react to the action of \AAA{}. Given such a policy, \AAA{} responds by solving
\begin{align}
    \underset{v(\gamma; \theta)}{\textrm{minimize}} \; \E[c_A( \gamma(v), v; \theta)],
    \label{eq:F.resp.G2}
\end{align}
where the expectation is taken with respect to any randomness considered in $\gamma$. \PPP{} then solves $\gamma^\star$ via
\begin{align}
     \underset{\gamma(\cdot)}{\textrm{minimize}} \; \E\left[ c_P(\gamma(v^\star(\gamma,\theta)), v^\star(\gamma,\theta); \theta) \right],
    \label{eq:L.opt.G2}
\end{align}
where the expectation is computed with respect to possible randomness in $\gamma$ and ${\theta \sim \mu_0}$.

Next, we investigate two different variations of $\mathfrak{G}_2$ that allow different forms of information acquisition by \PPP{}.
In $\mathfrak{G}_3$, \AAA{} designs a signaling policy $\pi$ and commits to it before seeing the state. Then, he observes the state $\theta$ and generates the signal $s$ that takes values in a suitable signal space $\mathcal{S}$ based on the realized $\theta$ and the signal likelihoods encoded in the signaling mechanism $\pi$. \PPP{} observes the signal $s$ and updates the belief from $\mu_0$ to the posterior distribution $\mu_s(s, \pi) \in \Delta(\Theta)$ via Bayes' rule. Since $s \sim \pi$ is random, the posterior distribution $\mu_s(s, \pi)$ is consequently random. A signaling mechanism $\pi$ thus induces a distribution $\tau(\mu_0, \pi)$ over posterior beliefs. 

For a given choice of a signaling mechanism $\pi$ by \AAA{} in $\mathfrak{G}_3$,  \PPP{} then solves \eqref{eq:L.opt.G2}, where the expectation is now computed with respect to the randomness in $\gamma$ and $\theta \sim \mu_s(s, \pi)$. \AAA{} must \emph{shape the belief} of \PPP{} in a way that \PPP{} prescribes an incentive policy that in turn is conducive for \AAA{}. That is, \AAA{} must solve \eqref{eq:F.resp.G2} over both $v(\gamma; \theta)$ and $\pi$, knowing that \AAA{} will respond by solving  \eqref{eq:L.opt.G2} with the shaped belief $\mu_s(s, \pi)$ about $\theta$. The expectation in solving \eqref{eq:F.resp.G2} is computed with respect to randomness in the response $\gamma$, state $\theta \in \mu_0$, and the signal $s \in \pi$. Game $\mathfrak{G}_3$ thus combines Bayesian persuasion by \AAA{} to influence the incentive design by \PPP{}.

Finally, we consider game $\mathfrak{G}_4$, where \PPP{} invests in designing an information channel $\pi$ that generates a signal $s\in \mathcal{S}$ that carries information about $\theta$. Then, the state is realized and observed by \AAA{}, but not \PPP{}. However, \PPP{} uses $\pi$ to generate a signal $s$ and uses $s$ to update her belief about $\theta$. Akin to the setup in $\mathfrak{G}_3$, \PPP{} updates her belief from $\mu_0$ to the posterior distribution $\mu_s(s, \pi)$ and proceeds with the design of an incentive mechanism for \AAA{} with this updated belief. Thus, $\mathfrak{G}_4$ is similar to $\mathfrak{G}_3$, except that \PPP{} must now design the signaling experiment to her advantage instead of leaving it to \AAA{}, whose incentives might not be aligned with \PPP{}.


If experimentation or channel construction is free, \PPP{} will always choose a channel that reveals $\theta$ exactly to circumvent paying any price (in terms of achievable equilibrium costs) for not knowing $\theta$. Typically, higher quality experiments are costlier to design. Myriad of ways exist to model experimentation costs, e.g., see \cite{matyskova2023bayesian}. We follow the model in \cite{gentzkow2014costly} and consider the cost of channel construction to be the reduction in entropy from a signaling mechanism as
\begin{align}
    c(\pi) := \E[H(\mu'_0)-H(\mu'_s(s, \pi))],
    \label{eq:cost.pi}
\end{align}
where $\mu'_0\in\Delta(\Theta)$ is a reference prior that $\pi$ maps to a posterior distribution $\mu'_s(s, \pi)$ when $\pi$ generates the signal $s$.  Call the distribution over these posteriors $\tau(\mu'_0, \pi) \in\Delta(\Delta(\Theta))$.
The expectation in the relation above is computed with respect to the randomness in $s \sim \pi$, $\theta \sim \mu_0$, and $\mu'_0$.
One can take $\mu'_0$ to be the prior $\mu_0$, but such a choice makes the channel cost prior-dependent, a property we avoid as in \cite{gentzkow2014costly}. Rather, we choose $\mu'_0$ according to the problem at hand. 

With this information channel construction cost model, \PPP{} in $\mathfrak{G}_4$ therefore solves \eqref{eq:L.opt.G2} over both the incentive mechanism $\gamma$ and the information channel design $\pi$, where the objective function in \eqref{eq:L.opt.G2} is augmented with $c(\pi)$. \AAA{} then responds by solving \eqref{eq:F.resp.G2} similar to that in $\mathfrak{G}_2$. In what follows, we study $\mathfrak{G}_1$--$\mathfrak{G}_4$ in different game setups.

\section{Studying $\mathfrak{G}_1$--$\mathfrak{G}_4$ in Matrix Games}\label{Sec: MatrixGames}
To study incentive design with information asymmetry and acquisition via  $\mathfrak{G}_1$--$\mathfrak{G}_4$ in matrix games, consider a principal-agent interaction with action sets $\mathcal{U} = \{u_1, \ldots, u_m\}$, $\mathcal{V} = \{v_1, \ldots, v_n\}$, and state space $\Theta = \{\theta_1, \theta_2\}$.
The costs of \PPP{} and \AAA{} for $\theta \in \Theta$ can be represented via  matrices $C_P(\theta), C_A(\theta)\in\mathbb{R}^{m\times n}$, where
\begin{align}
c_P(u_i,v_j ;\theta)= [C_P(\theta)]_{i,j},  c_A(u_i,v_j ;\theta)=[C_A(\theta)]_{i,j},
\end{align}
for each $\theta \in \Theta$,
making \PPP{} and \AAA{} choose the row and column of the matrices as actions, respectively. The common prior is given by $\mu_0 \in \Delta(\Theta)$, the probability simplex over $\Theta$. With this notation, we now study $\mathfrak{G}_1$--$\mathfrak{G}_2$ in Section \ref{Subsec: g1_g2_matrix} and $\mathfrak{G}_3$--$\mathfrak{G}_4$ in Section \ref{Subsec: g3_g4_matrix}, respectively.

\vspace{-0.25 cm}
\subsection{Solving $\mathfrak{G}_1$ and $\mathfrak{G}_2$ via Linear Programming}\label{Subsec: g1_g2_matrix}
\vspace{-0.15 cm}
When the state is $\theta_k \in \Theta$, let the response of \PPP{} to \AAA{} playing $v_j \in \mathcal{V}$ be given by the mixed strategy $\gamma(v_j; \theta) \in \Delta(\mathcal{U})$. To identify a Stackelberg equilibrium, we compute the least cost that \PPP{} can obtain when \AAA{} best responds with $v_j$, and then finds the minimum over those costs as $v_j$ varies over $\mathcal{V}$.
In other words, 
\PPP{} obtains its optimal cost along an equilibrium path by solving
\begin{align}
    J_{P,1}^\star(\theta_k) := \underset{v_j \in \mathcal{V}}{\text{minimum}} \; \gamma^\star(v_j;\theta_k)^\top C_P(\theta_k)e_j,
    \label{eq:J1.v}
\end{align}
where $e_j$ is a vector with all zeros except unity in the $j^{\textrm{th}}$ position, and $\GS(v_j;\theta_k)$ is an optimal solution of the following linear program.
\begin{alignat}{2}\label{eq: g1_mat}
\begin{aligned}
&  \underset{\gamma(\cdot; \theta_k) \in \Delta(\mathcal{U})}{\text{minimize}} && \gamma(v_j;\theta_k)^\top C_P(\theta_k)e_j,\\
& \; \text{subject to}&& \gamma(v_j;\theta_k)^\top C_A(\theta_k)e_j\leq \gamma(v_l;\theta_k)^\top C_A(\theta_k) e_l,\\
& \; &&
 \forall l\in [n].
 \end{aligned}
\end{alignat}
With a slight abuse of notation, we define $J_{P,1}^\star(\mu_0) := \E_{\theta_k\sim \mu_0}[J_{P,1}^\star(\theta_k)]$.
The constraint encodes \AAA{} playing $v_j$ under the policies $\gamma(\cdot; \theta_k)$ chosen by \PPP{}.\footnote{Problem \eqref{eq: g1_mat} solves for a feedback Stackelberg equilibrium. An open-loop Stackelberg equilibrium can be computed by imposing the additional constraint $\gamma(v_j;\theta_k)=\gamma(v_l;\theta_k), \forall j,k\in[n]$ \cite{conitzer2006computing}.} This amounts to \PPP{} solving $m$ linear programs over $mn$ variables. If the optimal action of \AAA{} from \eqref{eq:J1.v} is $v_{j^\star}(\theta_k)$, then define $J_{A,1}^\star (\mu_0):= \E_{\theta_k\sim \mu_0}[\gamma^\star(v_{j^\star}(\theta_k), \theta_k)^\top C_A(\theta_k) e_{j^\star}]$ as \AAA{}'s optimal cost.

Consider an example matrix game with $m=n=2$ and
\begin{gather}
\begin{gathered}
    C_P(\theta_1) = \begin{pmatrix}
        5 & 5 \\ 5 & 1
    \end{pmatrix},
    \;
    C_P(\theta_2) = \begin{pmatrix}
        5 & 1 \\ 5 & 5
    \end{pmatrix},
    \\
    C_A(\theta_1) = \begin{pmatrix}
        4 & 3 \\ 2 & 3
    \end{pmatrix},
    \;
    C_A(\theta_2) = \begin{pmatrix}
        2 & 3 \\ 4 & 2
    \end{pmatrix}.
\end{gathered}
\label{eq:ex.costs}
\end{gather}
Let $\mu_0(\theta_1)=0.4$. Utilizing \eqref{eq:J1.v}--\eqref{eq: g1_mat}, we obtain 
\begin{gather}
    \begin{gathered}
\gamma^\star(v_1;\theta_1)=\begin{pmatrix}
    0.5 & 0.5
\end{pmatrix}^\top,\gamma^\star(v_2;\theta_1)=\begin{pmatrix}
    0 & 1
\end{pmatrix}^\top, 
\\
\gamma^\star(v_1;\theta_2)=\begin{pmatrix}
    0 & 1
\end{pmatrix}^\top, \gamma^\star(v_2;\theta_2)=\begin{pmatrix}
    1 & 0
\end{pmatrix}^\top,
\\ 
    v_{j^\star}(\theta_1) = v_{j^\star}(\theta_2) = v_2,J_{P,1}^\star(\mu_0) = 1, 
    J_{A,1}^\star(\mu_0) = 3.
    \end{gathered}
\end{gather}

We next present a solution technique for $\mathfrak{G}_2$, where the state is sampled form $\mu_0$, but the realization is only known to \AAA{} and not \PPP{}. However, the latter knows $\mu_0$ and hence solves a Bayesian variant of $\mathfrak{G}_1$. Here, \PPP{} finds the least cost that she can obtain when \AAA{} responds by playing the pair $(v_i, v_j)$ with states $(\theta_1,\theta_2)$, respectively, and then minimizes that cost over all possible response pairs in $\mathcal{V} \times \mathcal{V}$. Thus, \PPP{} solves her equilibrium cost as follows.
\begin{align}\label{eq: g2_mat_min}
\begin{split}
J_{P,2}^\star(\mu_0) &= \underset{i,j \in [m]^2}{\text{minimum}} \{\mu_0(\theta_1)\gamma^\star_{ij}(v_i)^\top C_P(\theta_1)e_i\\
   & \qquad +\mu_0(\theta_2)\gamma^\star_{ij}(v_j)^\top C_P(\theta_2)e_j\},
\end{split}
\end{align}
where $\GS_{ij}$ is a minimizer of
    \begin{alignat}{2}
    \label{prog: g2_mat}
    \begin{aligned}
        &\underset{\gamma(\cdot) \in \Delta(\mathcal{U})}{\text{minimize}} && \mu_0(\theta_1) \gamma(v_i)^\top C_P(\theta_1)e_i 
        \\
        &&& \quad + \mu_0(\theta_2)\gamma(v_j)^\top C_P(\theta_2)e_j,\\
     &\text{subject to}\quad && 
       \gamma(v_i)^\top C_A(\theta_1)e_i\leq \gamma(v_l)^\top C_A(\theta_1) e_l,\\
       & &&\gamma(v_j)^\top C_A(\theta_2)e_j\leq \gamma(v_{l'})^\top C_A(\theta_2) e_{l'},\\
       & &&  l \times l' \in [n]^2.
       \end{aligned}
    \end{alignat}
Cost for \AAA{} at an equilibrium is given by 
\begin{align}
\begin{aligned}
    J_{A, 2}^{\star F} (\mu_0) 
    &= \mu_0(\theta_1)\gamma_{i^\star j^\star}^{\star\top}(v_{i^\star}) C_A(\theta_1)e_{i^\star} \\
    & \; \;
    +\mu_0(\theta_2)\gamma_{i^\star j^\star}^{\star\top}(v_{j^\star})C_A(\theta_2)e_{j^\star},
\end{aligned}
\end{align}
where the minimum of \eqref{eq: g2_mat_min} is attained at $(i^\star, j^\star)$. 

In $\mathfrak{G}_2$, \PPP{} is oblivious to the realization of the state, and hence, her strategies do not depend on the same. The equilibrium costs of $\mathfrak{G}_1$ and $\mathfrak{G}_2$ satisfy
\begin{align}
    J_{P,1}^\star (\mu_0)\leq J_{P,2}^\star(\mu_0), \; k=1,2.
\end{align} 
The difference between these costs can be viewed as the price \PPP{} pays for ignorance of the state. In the example game with costs described in \eqref{eq:ex.costs}, an optimal strategy for \PPP{} in this Bayesian game $\mathfrak{G}_2$, is 
\begin{align}
    \gamma^\star(v_1) = \begin{pmatrix}
    0 & 1
\end{pmatrix}^\top, \; \gamma^\star(v_2) = \begin{pmatrix}
    1 & 0
\end{pmatrix}^\top,
\end{align}
leading to $J_{P,2}^\star = 2.6$. \AAA{} responds by playing $(v_1, v_2)$ in states $\theta = (\theta_1, \theta_2)$, respectively, achieving $J_{A,2}^\star(\mu_0) = 2.6$. Thus, \PPP{} incurs an extra cost in $\mathfrak{G}_2$ compared to that in $\mathfrak{G}_1$; the story is opposite for \AAA{}. We next study $\mathfrak{G}_3$ and $\mathfrak{G}_4$ that provide mechanisms via which \PPP{} might be able to close this performance gap by acquiring more information. 

\vspace{-0.1cm}
\subsection{Harnessing Information via $\mathfrak{G}_3$ or $\mathfrak{G}_4$}\label{Subsec: g3_g4_matrix}
Should \PPP{} allow \AAA{} to reveal information in $\mathfrak{G}_3$ about $\theta$ or invest in construction of an information channel in $\mathfrak{G}_3$ to better her equilibrium costs from that in $\mathfrak{G}_2$? In both these games, the information acquisition step changes the prior of \PPP{} from $\mu_0$ to a random posterior $\mu_s$ so that \PPP{} accrues a cost of $J_{P,2}^\star(\mu_s)$, possibly augmented with any cost associated with information acquisition. In our first result, we establish a structural property of $J_{P,2}^\star$ that implies that \PPP{} can only benefit from more information. 
\begin{lemma}\label{lem: cvx_obj}
$J_{P,2}^\star :\Delta(\Theta) \to \mathbb{R}$ is piecewise affine and concave.
\end{lemma}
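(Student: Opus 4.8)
The plan is to recognize $J_{P,2}^\star$ as the optimal value of a parametric family of linear programs whose cost coefficients depend affinely on the prior, and then to invoke the classical fact that the value of a minimization LP, viewed as a function of its cost vector, is concave and piecewise affine. Since $\Theta = \{\theta_1,\theta_2\}$, I would parameterize $\Delta(\Theta)$ by the scalar $p := \mu_0(\theta_1) \in [0,1]$, so that $\mu_0(\theta_2) = 1-p$, and regard every quantity as a function of $p$.

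First I would fix a pair of agent responses $(i,j)$ and inspect the program \eqref{prog: g2_mat}. The key structural observation is that the inner minimization in \eqref{eq: g2_mat_min}, after substituting its own minimizer $\GS_{ij}$, is exactly the optimal value of \eqref{prog: g2_mat}; call it $f_{ij}(p)$. The apparent dependence of the minimizer $\GS_{ij}$ on $p$ is therefore a red herring, since it is the value function, not the argmin, that I need to analyze. The feasible region of \eqref{prog: g2_mat}, namely the set of policies with $\gamma(v_l) \in \Delta(\mathcal{U})$ for every $l$ satisfying the two families of incentive-compatibility constraints, is a fixed compact polytope $\mathcal{P}_{ij} \subseteq \prod_l \Delta(\mathcal{U})$ that does not depend on $p$, because the prior enters only the objective.

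Next I would write the objective as $F_{ij}(\gamma,p) = p\,\gamma(v_i)^\top C_P(\theta_1) e_i + (1-p)\,\gamma(v_j)^\top C_P(\theta_2) e_j$, which is affine in $p$ for each fixed $\gamma$, so that $f_{ij}(p) = \min_{\gamma \in \mathcal{P}_{ij}} F_{ij}(\gamma,p)$ is a pointwise infimum over $\gamma$ of affine functions of $p$, hence concave. To additionally obtain piecewise affineness, I would use that a linear objective over the polytope $\mathcal{P}_{ij}$ attains its minimum at one of the finitely many vertices $\gamma^{(1)},\dots,\gamma^{(K)}$, so that $f_{ij}(p) = \min_{1 \le k \le K} F_{ij}(\gamma^{(k)},p)$ is a minimum of finitely many affine functions of $p$. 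Consequently, by \eqref{eq: g2_mat_min}, $J_{P,2}^\star$ equals the minimum of the $f_{ij}$ over the finitely many pairs $(i,j)$, which in turn is a minimum of finitely many affine functions of $p$ across all pairs and all their vertices; such a function is automatically concave and piecewise affine.

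The step I expect to require the most care is the first one: verifying that $\mathcal{P}_{ij}$ is genuinely prior-independent and that the quantity inside the outer minimum of \eqref{eq: g2_mat_min} coincides with the LP value $f_{ij}(p)$, so that the parametric-LP machinery applies cleanly. A minor bookkeeping point within this step is to handle pairs $(i,j)$ for which $\mathcal{P}_{ij} = \emptyset$ (no incentive-compatible policy inducing that response pair exists); such pairs contribute value $+\infty$ and can be discarded from the outer minimum without affecting either concavity or piecewise affineness of $J_{P,2}^\star$.
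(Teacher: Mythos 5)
Your proof is correct and follows essentially the same route as the paper's: both arguments rest on the observation that the feasible polytope of \eqref{prog: g2_mat} is independent of the prior while the objective is affine in it, so each value function $f_{ij}$ is a minimum of finitely many affine functions (one per vertex), and the outer minimum over the finitely many pairs $(i,j)$ preserves piecewise affineness and concavity. Your additional care about the argmin-versus-value distinction and the infeasible pairs $\mathcal{P}_{ij}=\emptyset$ only tightens bookkeeping the paper leaves implicit.
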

\begin{proof}
    The optimization problem in \eqref{prog: g2_mat} is a linear program whose feasible set is a polytope that is independent of $\mu_0$. The corners of said polytope defines its candidate optimal solutions. Its objective function being affine in $\mu_0$, the optimal value of \eqref{prog: g2_mat} becomes the minimum of finitely many affine functions of $\mu_0$, making it piecewise affine and concave. The same argument carries over to deduce the structure of $J_{P,2}^\star$ as it takes the minimum over the aforementioned piecewise affine and concave functions, completing the proof. 
\end{proof}

Figure \ref{fig: principal_concave_cost} illustrates the piecewise affine and concave structure of $J_{P,2}^\star$ as a function of the belief about $\theta$ held by \PPP{} in our example game with cost structure \eqref{eq:ex.costs}. Notice that \PPP{} has lower costs towards full information beliefs, i.e., when $\mu_0(\theta_1)$ is close to zero or unity, and higher costs when she is more uncertain about the state, i.e., when $\mu_0(\theta_1)$ is near $1/2$, demonstrating the need for information acquisition. 
\begin{figure}[h]
    \centering
    \includegraphics[ width=0.75\linewidth]{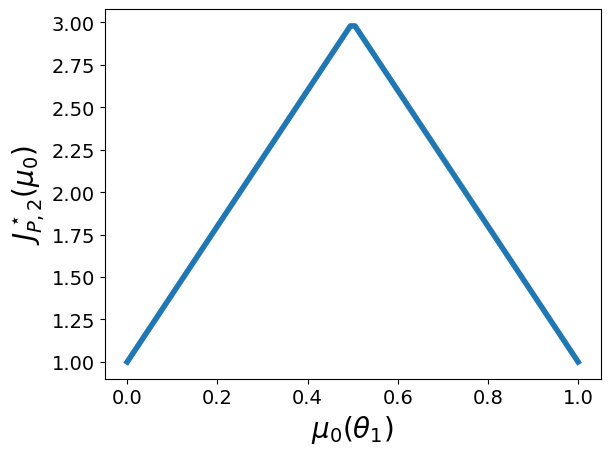}
    \caption{Principal's cost is piecewise affine and concave}
    \label{fig: principal_concave_cost}
\end{figure}

Information acquisition requires \PPP{} to observe a signal that is correlated with the state. In $\mathfrak{G}_3$, she allows \AAA{} to design a signaling scheme and in $\mathfrak{G}_4$, she invests in an experiment that generates such a signal. To study these games, let \PPP{} observe a signal $s \in \mathcal{S}$ that is generated according to likelihoods $\{\pi(s|\theta_k)\}_{k\in[2]}$, where $\mathcal{S}$ is a suitable space of signals. In $\mathfrak{G}_3$, we assume that \AAA{} commits to $\pi$ that is revealed to \PPP{}. In $\mathfrak{G}_4$, \PPP{} designs $\pi$ at a cost.

In both $\mathfrak{G}_3$ and $\mathfrak{G}_4$, \PPP{} observes $s$ and updates her belief from the prior $\mu_0$ to $\mu_s$ using Bayes' rule,
\begin{align}
    \mu_s(\theta) =\frac{\pi(s\mid\theta)\mu_0(\theta)}{\sum_k\pi (s\mid\theta_k)\mu_0(\theta_k)}.
\end{align}
Since $s$ is random, so is $\mu_s$. Hence, each signaling mechanism $\pi$ and a prior $\mu_0$ leads to a distribution over posteriors $\tau(\mu_0, \pi)\in\Delta(\Delta(\theta))$. In $\mathfrak{G}_3$ and $\mathfrak{G}_4$, after the Bayesian update, \PPP{} and \AAA{} engage in playing $\mathfrak{G}_2$ with the prior replaced by $\mu_s$. 

Bayes' rule yields the Markov property that $\E[\mu_s] = \mu_0$, where this expectation is taken w.r.t. $\tau(\mu_0, \pi)$ induced by $\pi$. Coupled with Lemma \ref{lem: cvx_obj}, we deduce that
\begin{align}
    \E[J_{P,2}^\star(\mu_s)] \leq J_{P,2}^\star(\E[\mu_s]) = J_{P,2}^\star(\mu_0),
\end{align}
meaning that information can only benefit \PPP{} in expectation.
However, this does not hold for \AAA{} who has the informational advantage and hence can shape the posterior belief $\mu_s$. A key result of \cite{kamenica2011bayesian} is that the Markov property of beliefs is not just necessary but also sufficient for any signaling mechanism. That is, any set of posteriors satisfying the Markov property $\E_{\mu_s \sim \tau(\mu_0, \pi)}[\mu_s] = \mu_0$ can be induced by a suitable $\pi$. Thus, \AAA{} in $\mathfrak{G}_3$ solves
\begin{align}\label{Eq: g3_obj}
\begin{split}
    \underset{\tau}{\text{minimize}} \ \E_{\tau}[J_{A,2}^{\star}(\mu_s)], \;
    \text{subject to }\E_{\mu_s\sim\tau}[\mu_s]=\mu_0,
\end{split}
\end{align}
where $J_{A, 2}^{\star}(\mu_s)$ is the cost to \AAA{} at an equilibrium in $\mathfrak{G}_2$, when the belief held by \PPP{} is $\mu_s$. Let the optimal cost of the above problem be given by $J_{A, 3}^\star(\mu_0)$.

For an arbitrary function $\phi : \mathcal{X} \to \mathbb{R}$ from an arbitrary subset $\mathcal{X}$ of a Euclidean space, let ${\sf co} [\phi](y) := \sup_{x\in\mathcal{X}} \left(y^\top x - \phi(x)\right)$ denote its Fenchel conjugate.  Then, the bi-conjugate of a function $\phi$ is ${\sf co}[{\sf co}[\phi]]$. An immediate consequence of \cite[Corollary 2]{kamenica2011bayesian} is the following result, the proof of which is omitted.
\begin{lemma}\label{lem: g2_convexification}
    $J_{A,3}^\star(\mu_0) = {\sf co}[{\sf co}[J_{A,2}^\star]](\mu_0).$
\end{lemma}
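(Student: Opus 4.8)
The plan is to read \eqref{Eq: g3_obj} as a splitting (Bayesian persuasion) problem and then convert the Kamenica--Gentzkow value identity into the language of the Fenchel conjugate defined just above the statement. Write $\phi := J_{A,2}^\star$. By the sufficiency half of the Markov/Bayes-plausibility property recalled immediately before the lemma, the constraint $\E_{\mu_s\sim\tau}[\mu_s]=\mu_0$ describes exactly the signaling schemes \AAA{} can implement, so \AAA{} may induce \emph{any} distribution over posteriors with barycenter $\mu_0$, and
\begin{align}
    J_{A,3}^\star(\mu_0)=\inf\left\{\E_{\mu_s\sim\tau}[\phi(\mu_s)] \ :\ \E_{\mu_s\sim\tau}[\mu_s]=\mu_0\right\}.
\end{align}
Because $\Theta=\{\theta_1,\theta_2\}$ makes $\Delta(\Theta)$ a one-dimensional segment, Carath\'eodory's theorem lets me restrict $\tau$ to finitely many atoms, whence the right-hand side equals $\inf\{\sum_k\lambda_k\phi(\mu_k):\sum_k\lambda_k\mu_k=\mu_0\}$ over probability vectors $(\lambda_k)$ --- which is, by definition, the (lower) convex envelope $\mathrm{vex}\,\phi$ evaluated at $\mu_0$.

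Equivalently, this is the minimization-oriented restatement of \cite[Corollary 2]{kamenica2011bayesian}: applying their concavification result to the sender value $-\phi$ gives $\sup_\tau\E_\tau[-\phi]=\mathrm{cav}(-\phi)(\mu_0)$, and negating yields $J_{A,3}^\star(\mu_0)=\mathrm{vex}\,\phi(\mu_0)$. It then remains to invoke the textbook identity $\mathrm{vex}\,\phi={\sf co}[{\sf co}[\phi]]$, i.e. that the convex envelope coincides with the Fenchel biconjugate (Fenchel--Moreau). To apply it I would first record that $\phi=J_{A,2}^\star$ is bounded and piecewise affine with finitely many pieces, by the same argument as in the proof of Lemma \ref{lem: cvx_obj}: the feasible polytope of \eqref{prog: g2_mat} is prior-independent, so $J_{A,2}^\star$ is assembled from finitely many affine selections indexed by that polytope's vertices and by the response pairs $(i,j)$.

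The main obstacle is precisely this last identification. Unlike the concave, continuous $J_{P,2}^\star$ of Lemma \ref{lem: cvx_obj}, the agent's equilibrium cost $J_{A,2}^\star$ need not be continuous: as $\mu_0$ crosses a breakpoint between affine pieces, the principal's optimal incentive policy can switch, so \AAA{}'s induced cost may jump, and $\phi$ can fail to be lower semicontinuous there. Since ${\sf co}[{\sf co}[\phi]]$ returns the largest \emph{closed} convex minorant of $\phi$, one must check that the splitting infimum is insensitive to the lower-semicontinuous closure of $\phi$ at such breakpoints. I would settle this by noting that the barycenter constraint lets any target posterior be approached by mixtures of nearby posteriors at arbitrarily small extra cost, so replacing $\phi$ by its lsc hull leaves the infimum in \eqref{Eq: g3_obj} unchanged; both expressions then reduce to the same closed convex envelope, establishing $J_{A,3}^\star(\mu_0)={\sf co}[{\sf co}[J_{A,2}^\star]](\mu_0)$ as claimed.
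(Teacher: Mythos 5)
Your route is in fact the paper's route: the paper offers no proof beyond declaring the lemma an immediate consequence of \cite[Corollary 2]{kamenica2011bayesian}, together with the remark that the bi-conjugate convexifies $J_{A,2}^\star$, and your proposal fills in exactly that chain (Bayes-plausibility/splitting $\Rightarrow$ value equals the convex envelope $\Rightarrow$ envelope equals the Fenchel bi-conjugate). Your observations that $J_{A,2}^\star$ is bounded, piecewise affine with finitely many pieces, and possibly \emph{discontinuous} at breakpoints where the principal's optimal incentive policy switches are all correct, and identifying the lower-semicontinuity issue is a real contribution relative to the paper's one-line citation.

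The flaw is in how you dispose of that issue. Your claim that ``the barycenter constraint lets any target posterior be approached by mixtures of nearby posteriors at arbitrarily small extra cost'' is only valid at priors in the relative interior of $\Delta(\Theta)$. At a vertex $\mu_0 = \delta_{\theta_k}$ the barycenter constraint in \eqref{Eq: g3_obj} admits \emph{only} the point mass $\tau = \delta_{\delta_{\theta_k}}$, so no perturbation is available: there $J_{A,3}^\star(\delta_{\theta_k}) = J_{A,2}^\star(\delta_{\theta_k})$ exactly, whereas ${\sf co}[{\sf co}[J_{A,2}^\star]](\delta_{\theta_k})$ is the value of the \emph{closed} convex envelope, which is strictly smaller whenever $J_{A,2}^\star$ jumps upward at that vertex. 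So your lsc-hull replacement step fails precisely where it is needed, and under your own premises (a discontinuous $J_{A,2}^\star$) the identity can break at degenerate priors. The clean repair is to split the argument: (i) for full-support priors, note that the convex envelope ${\rm vex}\,\phi$ is a convex function finite on all of $\Delta(\Theta)$, hence continuous on the relative interior, where it therefore coincides with its closure ${\sf co}[{\sf co}[\phi]]$ --- no approximation argument is needed; (ii) at the two vertices, either impose a tie-breaking convention in \eqref{prog: g2_mat} (e.g., define $J_{A,2}^\star$ by taking the infimum of the agent's cost over the principal's optimal solutions, which makes the piecewise affine function lower semicontinuous, so that ${\rm vex} = {\sf co}\circ{\sf co}$ holds everywhere on the compact simplex), or restrict the lemma's statement to priors with full support, which covers every example the paper actually computes.
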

Graphically, the epigraph of the bi-conjugate of $J_{A,2}^\star$ is the convex hull of the epigraph of $J_{A,2}^\star$. 
This relationship follows from the fact that if $(\mu_0, t)$ lies in the epigraph of $J_{A,2}^\star$, then there exists a distribution $\tau$ such that $\E_{\mu_s \sim \tau}[\mu_s]=\mu_0$ and $\E[J_{A,2}^\star(\mu_s)] = t$. Said differently, the bi-conjugate \emph{convexifies} $J_{A,2}^\star$. If that function is convex to begin with, there is no advantage in inducing beliefs in \PPP{} via signaling. Otherwise, a non-zero difference between $J_{A,2}^\star$ and its bi-conjugate at $\mu_0$ equals the benefit from persuasion.

In general, computing the bi-conjugate of a function is challenging. However, we are able to compute the solution for $\mathfrak{G}_3$ using the observation in \cite{kamenica2011bayesian} that it is sufficient to consider a signal space with cardinality that is the minimum between the number of states (which for our problem is $|\Theta| = 2$), and the number of actions available to \PPP{} (which is finite for our problem as we describe next). By sufficient, we mean that with such signal spaces, it is possible to construct a signaling mechanism to induce \emph{any} distribution $\tau$ over possible posteriors that satisfies the Markov property. To see why the effective number of actions for \PPP{} is finite, consider \eqref{prog: g2_mat}.
The feasible set of this problem for a specific $(v_i, v_j)$ is a polytope that is independent of $\mu_0$. Enumerate the vertices of this polytope as $\Xi_{ij}$. For any $\mu_0$, \PPP{} solves $n^2$ linear programs over these polytopes corresponding to all possible $(v_i, v_j)$ pairs, each of which admits corner solutions. In other words, the optimal incentive policy for \PPP{} is one of the vertices of this collection of $n^2$ polytopes, given by $\Xi := \cup_{(i,j) \in [n]^2} \Xi_{ij}$. These finite number of vertices can be thought of as the candidate actions for \PPP{}. Then, \AAA{} \emph{recommends} one of the ``actions'' in $\Xi$ through a (mixed) signaling policy $\pi(\cdot \mid \theta_k)$ such that it is incentive compatible for \PPP{} to accept the recommendation from \AAA{} and in doing so, \AAA{} minimizes his expected cost when \PPP{} follows that recommendation. That is, $J_{A,3}^\star(\mu_0)$ becomes the optimal cost of the following problem:
\begin{alignat}{2}
\begin{aligned}
        &\underset{\pi}{\text{minimize}} &&\sum_{i,j,\xi}\pi(\gamma^\xi_{ij}\mid \theta_1)\mu_0(\theta_1)\gamma^\xi_{ij}(v_i)C_A(\theta_1)e_i\\
& &&+\pi(\gamma^\xi_{ij}\mid \theta_2)\mu_0(\theta_2)\gamma^\xi_{ij}(v_j)C_A(\theta_2)e_j,\\
&\text{subject to }&&\sum_{i,j,\xi}\pi(\gamma^\xi_{ij}\mid \theta)=1, \pi(\gamma^\xi_{ij}\mid\theta_k)\geq 0,\\
& &&\pi(\gamma^\xi_{ij}\mid \theta_1)\mu_0(\theta_1)\gamma^\xi_{ij}(v_i) C_P(\theta_1)e_i\\
& && + \pi(\gamma^\xi_{ij}\mid \theta_2)\mu_0(\theta_2)\gamma^\xi_{ij}(v_j) C_P(\theta_2)e_j\\
& && \leq \pi(\gamma^\xi_{ij}\mid \theta_1)\mu_0(\theta_1)\gamma_{i' j'}^{\xi'}(v_{i'}) C_P(\theta_1)e_{i'}\\
& && + \pi(\gamma^\xi_{ij}\mid \theta_2)\mu_0(\theta_2)\gamma_{i' j'}^{\xi'}(v_{j'}) C_P(\theta_2)e_{j'}\\
& &&\forall \xi  \in \Xi_{ij}, (i,j) \in [n]^2, 
\\
&&&
\xi'  \in \Xi_{i' j'}, (i',j') \in [n]^2.
\end{aligned}
\label{prog: g3_agent}
\end{alignat}
We remark that while computation of bi-conjugate of an arbitrary function is often difficult, it is not so for piecewise affine functions--a property that underlies the construction of the bi-conjugate of $J_{A,2}^\star$ through the above optimization problem.

Consider $\mathfrak{G}_3$ with cost matrices in \eqref{eq:ex.costs}. The vertices of $\Xi_{ij}$ for $(i,j)\in[2]^2$ are given by
\begin{align}
\begin{aligned}
    \Xi_{1,1} &= 
    \left\{ \begin{pmatrix} 
    0.5 & 1\\
    0.5 & 0
    \end{pmatrix} \right\},
    \\
    \Xi_{1,2} &= 
    \left\{ \begin{pmatrix} 
    0 & 0\\
    1 & 1
    \end{pmatrix}, \begin{pmatrix} 
    0 & 1\\
    1 & 0
    \end{pmatrix}, \begin{pmatrix} 
    0.5 & 1\\
    0.5 & 0
    \end{pmatrix} \right\},
    \\
    \Xi_{2,1} &= 
    \left\{ \begin{pmatrix} 
    0.5 & 1\\
    0.5 & 0
    \end{pmatrix}, \begin{pmatrix} 
    1 & 0\\
    0 & 1
    \end{pmatrix} \right\},
    \\
    \Xi_{2,2} &= 
    \left\{ \begin{pmatrix} 
    0.5 & 1\\
    0.5 & 0
    \end{pmatrix}, \begin{pmatrix} 
    1 & 0\\
    0 & 1
    \end{pmatrix} \right\}.
\end{aligned}
\end{align}
Each entry of $\Xi_{ij}$ is a $2\times 2$ matrix, whose $i^{\textrm{th}}$ column defines the mixed strategy $\gamma(v_i)$ for \PPP{} to respond when \AAA{} chooses $v_i$. There are four unique $\gamma$'s among $\Xi$'s. One can verify that the objective function of \eqref{prog: g3_agent} is the same when $\pi$ uniformly randomizes across any two among $\Xi_{11} \cup \ldots \cup \Xi_{22}$ and the minimum of \eqref{prog: g3_agent} indeed occurs at such a signaling policy $\pi$. 
Such a $\pi$ corresponds to the case that \AAA{} chooses \emph{not} to reveal any information about the state to \PPP{}. As a result, \PPP{} fails to better her equilibrium costs in $\mathfrak{G}_3$ from that in $\mathfrak{G}_2$. \AAA{} achieves the same costs as in $\mathfrak{G}_2$ as well. In fact, using \eqref{prog: g3_agent} one can show that 
\begin{align*}
   J^\star_{A,3}(\mu_0)= \begin{cases}
    3-\mu_0(\theta_1),  & \text{if }\mu_0(\theta_1) \leq 0.5,
    \\ 2+\mu_0(\theta_1), & \text{otherwise},
    \end{cases}
\end{align*}
which is convex, and hence, equal to its bi-conjugate.

Consider another example of $\mathfrak{G}_3$ with  cost matrices
\begin{gather}
\begin{gathered}
    C_P(\theta_1) = \begin{pmatrix}
        4.95 & 5 \\ 5 & 1
    \end{pmatrix},
    \;
    C_P(\theta_2) = \begin{pmatrix}
        5 & 1 \\ 5 & 5
    \end{pmatrix},
    \\
    C_A(\theta_1) =\begin{pmatrix}
        2 & 5\\
        1 & 2
    \end{pmatrix}, \; \; 
    C_A(\theta_2) =\begin{pmatrix}
        3 & 1\\
        1 & 2
    \end{pmatrix}.
\end{gathered}
\label{eq:ex.costs1}
\end{gather}

For this problem, the equilibrium cost for \AAA{} as a function of the induced belief (characterized by $\mu_s(\theta_1)$) and its convex envelope is drawn in Figure \ref{fig: convexification_agents_cost}.
\begin{figure}[h]
    \centering
    \includegraphics[scale =0.28]{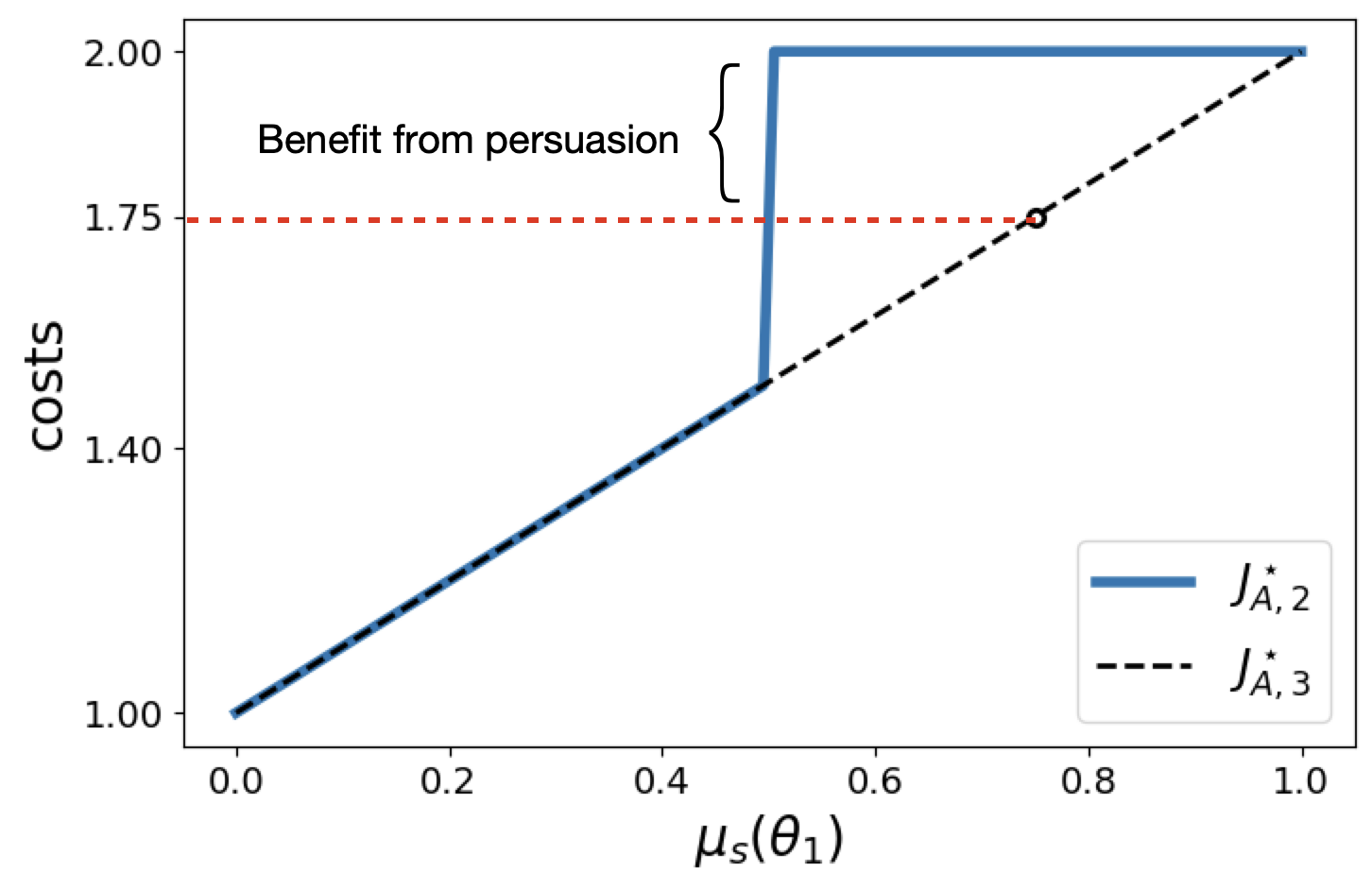}
    \caption{Illustration of convexification of \AAA{}'s cost.}
    \label{fig: convexification_agents_cost}
\end{figure}
The diagram reveals that for priors with probabilities of $\theta_1$ less than $1/2$, \AAA{} does not benefit from revealing any information about the state. As a result, \PPP{} cannot rely on \AAA{} to better its cost than in $\mathfrak{G}_2$ for such priors. However, \AAA{} stands to gain by revealing information when $\theta_1$ is more probable than $\theta_2$. For example, if $\mu_0(\theta_1) = 0.75$, the optimal signaling scheme from \eqref{prog: g3_agent} yields the candidate posterior distributions with $\mu_s(\theta_1)=0$ and $\mu_s(\theta_1)=1$ with probabilities $0.25$ and $0.75$, respectively. Then, we have $J_{A,3}^\star(\mu_0) = 1.75$. For this example,  $J_{A,2}^\star(\mu_0) = 2$, implying a gain of $0.25$ from information revelation. Finally, for this example, $J_{P,2}^\star(\mu_0) = 2$ and $J_{P,3}^\star(\mu_0) = 1$, meaning that revelation by \AAA{} also in turn benefits \PPP{}. Notice that the posterior beliefs exactly reveal the state, meaning that in this case, \PPP{} attains the cost in $\mathfrak{G}_1$ by allowing \AAA{} to signal about the state.



Next, we study $\mathfrak{G}_4$ where \PPP{} has access to a costly monitoring channel. Precisely, \PPP{} chooses the signaling mechanism $\pi$ instead of leaving it to \AAA{}, but pays for the quality of $\pi$. Consider the cost of channel construction as in \eqref{eq:cost.pi} with the reference prior $\mu'_0\in\Delta(\Theta)$ being uniform over $\Theta$ with $H(\mu'_0) = 1$. The distributions of the posteriors under $\pi$ with the priors being $\mu_0$ and $\mu'_0$ are then given by $\tau(\mu_0, \pi)$ and $\tau(\mu'_0, \pi)$, respectively.
With this notation, \PPP{} in $\mathfrak{G}_4$ minimizes
\begin{align}\label{eq: g4_agent_min}
    \E_{\tau(\mu_0, \pi)}[J_{P,2}^\star(\mu_s)] + \kappa\left( 1 - \E_{\tau(\mu'_0, \pi)}[H(\mu'_s)] \right)
\end{align}
over $\pi$, the minimum value of which we call $J_{P,4}^\star(\mu_0)$. In what follows, we identify $J_{P,4}^\star(\mu_0)$ as the bi-conjugate of a function related to $J_{P,2}^\star(\mu_0)$.

To approach the minimization, we express \eqref{eq: g4_agent_min} in terms of $\mu_s$ using the transformation,
\begin{align*}
    \mu_s'(\theta_k) = \mu_s(\theta_k) \frac{\mu_0'(\theta_k)/\mu_0(\theta_k)}{\sum_{k'}\mu_s(\theta_{k'})\mu_0'(\theta_{k'})/\mu_0(\theta_{k'})},
\end{align*}
where $\mu'_0(\theta_k) = 1/2$ for $k=1,2$.
This relation can be derived as in \cite{alonso2016bayesian} by using Bayes' rule that maps $\mu_0$ to $\mu_s$ and $\mu'_0$ to $\mu'_s$ via the common signaling scheme $\pi$. Using the above relation, we can then write 
\begin{align}
    \E_{\tau(\mu'_0, \pi)}[H(\mu'_s)] = \E_{\tau(\mu_0, \pi)}[\widetilde{H}(\mu_s)]
\end{align}
Plugging the above relation in \eqref{eq: g4_agent_min}, we then obtain the following result, similar to Lemma \ref{lem: g2_convexification}.
\begin{lemma}
$J_{P,4}^\star(\mu_0) = {\sf co}[{\sf co}[ J_{P,2}^\star - \kappa \widetilde{H}]] (\mu_0) + \kappa$.
\end{lemma}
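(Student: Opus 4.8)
The plan is to reduce the minimization defining $J_{P,4}^\star(\mu_0)$ to a single convexification of exactly the form already handled in Lemma~\ref{lem: g2_convexification}. The obstacle in \eqref{eq: g4_agent_min} is that its two expectations are taken under \emph{different} posterior distributions, $\tau(\mu_0,\pi)$ and $\tau(\mu'_0,\pi)$, which blocks a direct appeal to the persuasion machinery. So first I would eliminate the reference-prior expectation in favor of the true-prior one. Using the stated change of variables that expresses $\mu'_s$ as a deterministic function of $\mu_s$ (given the fixed priors $\mu_0$ and $\mu'_0$), the term $\E_{\tau(\mu'_0,\pi)}[H(\mu'_s)]$ rewrites as $\E_{\tau(\mu_0,\pi)}[\widetilde H(\mu_s)]$, where $\widetilde H$ absorbs both the reweighting of the signal marginals and the composition of $H$ with the map $\mu_s\mapsto\mu'_s$. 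The crucial point is that $\widetilde H$ depends only on $\mu_s$ and not on the particular $\pi$ that induced it, so after this substitution every expectation is over the single distribution $\tau(\mu_0,\pi)$.

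Next I would collect terms. Substituting this identity into \eqref{eq: g4_agent_min} and pulling out the prior-independent constant $\kappa$, the objective becomes $\E_{\tau(\mu_0,\pi)}\bigl[J_{P,2}^\star(\mu_s)-\kappa\widetilde H(\mu_s)\bigr]+\kappa$. Writing $\phi := J_{P,2}^\star - \kappa\widetilde H$, the principal thus minimizes $\E_{\tau(\mu_0,\pi)}[\phi(\mu_s)]+\kappa$ over all signaling mechanisms $\pi$.

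The remaining step mirrors the argument behind Lemma~\ref{lem: g2_convexification}. By the sufficiency half of the characterization in \cite{kamenica2011bayesian}, ranging over all $\pi$ is the same as ranging over all posterior distributions satisfying the Markov property $\E_{\mu_s\sim\tau}[\mu_s]=\mu_0$. Minimizing $\E_{\tau}[\phi(\mu_s)]$ over this set returns the value of the convex hull of $\phi$ at $\mu_0$, which is precisely the bi-conjugate ${\sf co}[{\sf co}[\phi]](\mu_0)$; adding back $\kappa$ yields the claimed identity $J_{P,4}^\star(\mu_0)={\sf co}[{\sf co}[J_{P,2}^\star-\kappa\widetilde H]](\mu_0)+\kappa$.

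I expect the main obstacle to be the first step, namely verifying rigorously that $\widetilde H$ is a well-defined function of $\mu_s$ alone. This requires checking via Bayes' rule that the likelihood ratio $\pi(s\mid\theta_1)/\pi(s\mid\theta_2)$ is recoverable from $\mu_s$, and that the ratio of signal marginals equals the normalizing denominator $\sum_{k}\mu_s(\theta_k)\mu'_0(\theta_k)/\mu_0(\theta_k)$ appearing in the change-of-variables formula; once this is confirmed, both the reweighting and the map $\mu_s\mapsto\mu'_s$ are pinned down independently of $\pi$. A secondary point worth noting is that, unlike $J_{A,2}^\star$ in Lemma~\ref{lem: g2_convexification}, the function $\phi$ is no longer piecewise affine, since $\widetilde H$ inherits the curvature of entropy; this does not affect the validity of the bi-conjugate identity, which holds for general $\phi$, but only the ease of computing it in closed form.
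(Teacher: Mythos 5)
Your proposal is correct and follows essentially the same route as the paper: rewrite the reference-prior entropy term as $\E_{\tau(\mu_0,\pi)}[\widetilde H(\mu_s)]$ via the stated change of variables, collect terms into $\E_{\tau(\mu_0,\pi)}[J_{P,2}^\star(\mu_s)-\kappa\widetilde H(\mu_s)]+\kappa$, and then invoke the Kamenica--Gentzkow characterization of Bayes-plausible posterior distributions (as in Lemma \ref{lem: g2_convexification}) to identify the minimum over $\pi$ with the bi-conjugate at $\mu_0$. Your explicit verification that $\widetilde H$ is well defined as a function of $\mu_s$ alone, including the signal-marginal reweighting factor $\sum_k \mu_s(\theta_k)\mu'_0(\theta_k)/\mu_0(\theta_k)$, is a detail the paper leaves implicit, but the argument is the same.
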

Recall that $J_{P,2}^\star$ is concave, per Lemma \ref{lem: cvx_obj}. However, $J_{P,2}^\star - \kappa \widetilde{H}$ may not be concave. As a result, gaining more information may not always be advantageous for \PPP{}, given the cost of channel construction. Furthermore, $J_{P,2}^\star - \kappa \widetilde{H}$ is no longer piecewise affine--a property that is instrumental in allowing the computation of the bi-conjugate of $J_{P,2}^\star$ via the linear program in \eqref{prog: g3_agent}. The curvature of $\widetilde{H}$ prevents a direct computation of this bi-conjugate, which we only explore via graphical convexification of $J_{P,2}^\star - \kappa \widetilde{H}$ for specific examples.



Consider again our example where costs of \PPP{} and \AAA{} are given by \eqref{eq:ex.costs1} and an information acquisition cost with $\kappa=2$. Figure \ref{fig: g4_principal} illustrates the cost of \PPP{}, $J_{P,2}^\star - \kappa \widetilde{H}$ and its bi-conjugate $J_{P,4}^\star$ as a function of the posterior distribution induced by the information channel. \PPP{} chooses a channel such that the prior $\mu_0(\theta_1)=0.75$ is split into $\mu_s^+(\theta_1) = 0.87$ and $\mu_s^-(\theta_1) =0.01$. This leads to a cost of 1.88 for \PPP{} and 1.75 for \AAA{}. Compare this to $\mathfrak{G}_2$, where \PPP{} incurs a cost of 2, while \AAA{} incurs a cost of  1.75. Thus, \PPP{} benefits from constructing an information channel. However, her cost is higher in $\mathfrak{G}_4$ than in $\mathfrak{G}_3$, where she is able to achieve a cost of 1. Thus, with a  channel construction cost of $\kappa = 2$, \PPP{} would prefer the design that allows \AAA{} to report about the state rather than invest in constructing a channel.
We address games with a different cost structure in the next section.

\begin{figure}[h]
    \centering
    \includegraphics[scale=0.3]{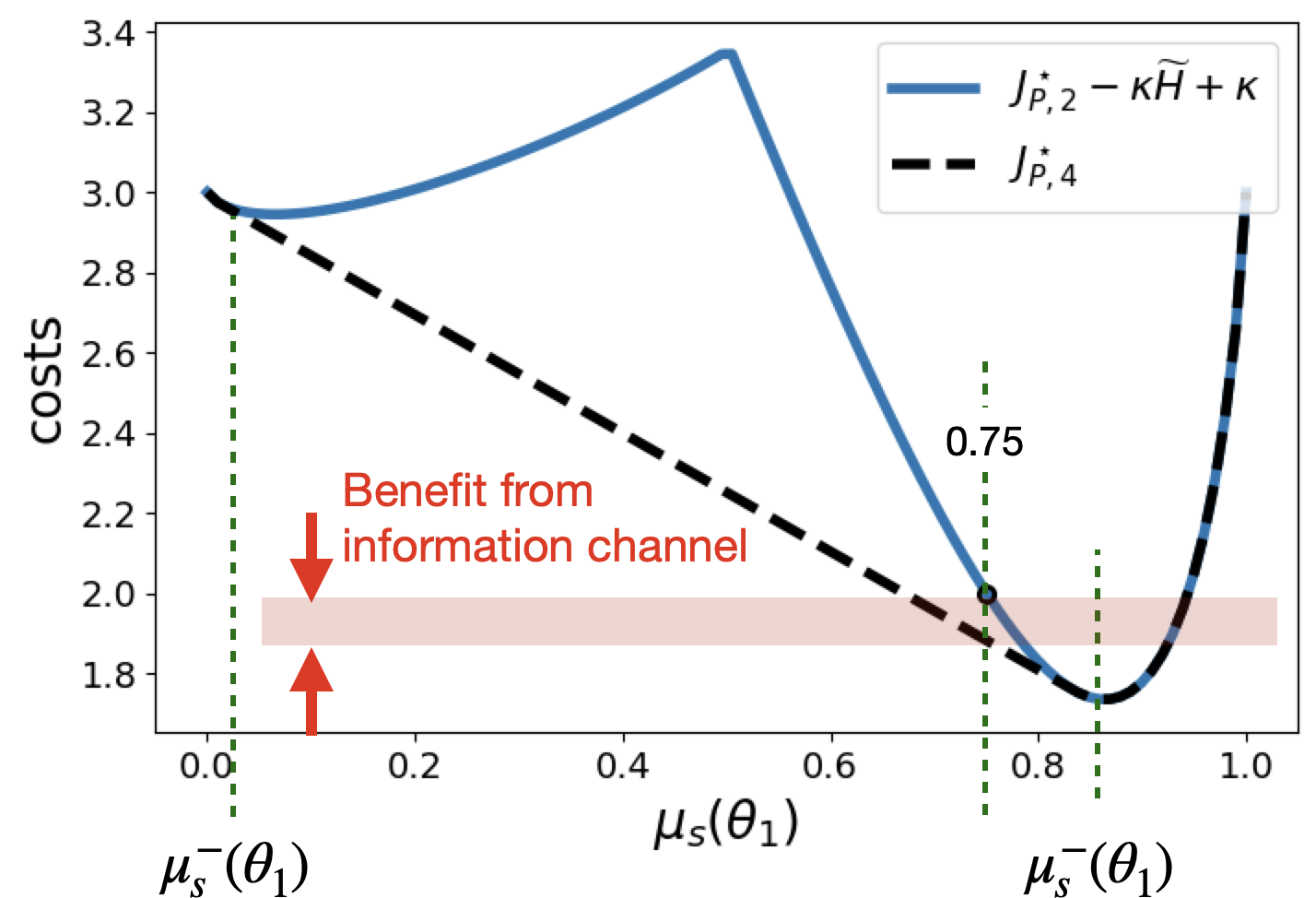}
    \caption{Convexification of \PPP{}'s cost considering information acquisition in game $\mathfrak{G}_4$.}
    \label{fig: g4_principal}
\end{figure}

\vspace{-0.5cm}
\section{$\mathfrak{G}_1$--$\mathfrak{G}_4$ in Quadratic Gaussian Games}
\label{sec:QG}
We now turn to study an example incentive design with information asymmetry and acquisition in a game 
with quadratic costs and continuous action spaces and state of the world. Specifically, let $\mathcal{U},\mathcal{V},\Theta$ be $\mathbb{R}$ and $\mu_0$ be a Gaussian distribution with mean $z_0$ and variance $\sigma_0^2$, i.e., $\mu_0 = \mathcal{N}(z_0,\sigma_0^2)$. Rather than a general theory, we will study a specific example and illustrate some of the nuances in studying information acquisition in QG games. We leave the general setup with vector state and action spaces to  future work. To that end, consider costs for \PPP{} and \AAA{}  given by 
\begin{align}
\begin{split}
    c_P(u,v;\theta)&=(\theta-u-v)^2+2u^2+\beta v^2,
    \\
    c_A(u,v;\theta)&=(\theta-u-v)^2+v^2.
\end{split}\label{Eq: Quad_costs}
\end{align}
Here, $\beta \geq 0$ is a cost related parameter.

In this section, we only consider deterministic maps of $v, \theta$ as candidate incentive policies for \PPP{}. First, we address the solution of $\mathfrak{G}_1$ for this QG game. Solving for a feedback
Stackelberg equilibrium requires \PPP{} to solve a functional optimization problem, which can be challenging. The work in \cite{bacsar1984affine} provides an alternative indirect mechanism to compute a Stackelberg equilibrium. In particular, 
\PPP{} herself solves for action $u^t$ for \AAA{} and $v^t$ for \PPP{} that together optimizes the cost for \PPP{}. {Since \PPP{}'s cost is strictly convex and quadratic, utilizing} first-order conditions, this \PPP{}-optimal solution for our example game is given by
\begin{align}
    u^t = \frac{\beta\theta}{3\beta+2}, \quad v^t = \frac{2\theta}{3\beta+2}.
\end{align}
Next, define an \emph{affine} incentive policy of the form, 
\begin{align}\label{Eq: g1_gamma_lq}
    \gamma(v; \theta) = u^t + Q (v - v^t),
\end{align}
where \PPP{} chooses $Q$ in such a way that the unique optimal response of \AAA{} to this incentive policy is $v^t$. Again, first order condition yields the optimal $Q^\star= (1-\beta)/\beta$, which in turn yields
\begin{align}
\begin{aligned}
    J_{P,1}^\star(\mu_0) &=\frac{2\beta(z_0^2+\sigma_0^2)}{3\beta+2},\\
    J_{A,1}^\star(\mu_0) &=\frac{4(\beta^2+1)(z_0^2+\sigma_0^2)}{(3\beta+2)^2}.
\end{aligned}
\end{align}


Next, we study $\mathfrak{G}_2$ within the same setup, where \AAA{} alone observes $\theta$ that is sampled from a commonly known prior $\mu_0$. This game involves a non-classical information structure, where \AAA{} has more information about the state than \PPP{} does, whereas \AAA{} has access to \PPP{}'s action. As a result the indirect mechanism for computing an optimal incentive policy $\gamma$ does not apply. Consequently, we restrict ourselves to specific functional forms for $\gamma(v)$ and study the resulting equilibrium. {However, the feedback incentive policy from $\mathfrak{G}_1$ is linear in $v,\theta$ and the coefficient of $v$ is independent of $\theta$. Motivated by this structure,} we consider \textit{affine mean feedback} policies for \PPP{} where the unknown realization of $\theta$ is replaced by its prior mean $z_0$ in \eqref{Eq: g1_gamma_lq}, which yields
\begin{align}\label{eq: affine_feedback}
    \gamma(v) = \frac{\beta^{2} z_0 - \left(\beta - 1\right) \left(v \left(3 \beta + 2\right) - 2 z_0\right)}{\beta \left(3 \beta + 2\right)},
\end{align}
for which the optimal response of \AAA{} is given by
\begin{align}\label{eq: vstar}
    v^\star(\theta) = \frac{\beta (3 \beta  + 2 )\theta -  (\beta^{2}   + 2 \beta - 2 )z_0}{ (\beta^2 + 1)( 3 \beta + 2)}.
\end{align}
%
Expected equilibrium costs for \PPP{} and \AAA{} in $\mathfrak{G}_2$ become
\begin{align}
    J_{P,2}^\star(\mu_0) &= \frac{2\beta}{3\beta + 2} z_0^2 + \frac{ \beta^{4} + \beta^{3} + 2 \beta^{2} - 4 \beta + 2}{\beta^{4} + 2 \beta^{2} + 1} \sigma_0^{2},
    \\
    J_{A,2}^\star(\mu_0) &= \frac{4  \left(\beta^{2} + 1\right)}{9 \beta^{2} + 12 \beta + 4} z_0^{2}
    + \frac{\beta^{2} }{\beta^{2} + 1}\sigma_0^{2}.
    \label{eq:G2.A.cost.QG}
\end{align}
Elementary algebra yields 
\begin{align}
\begin{aligned}
    J_{P,2}^\star - J_{P,1}^\star
    &= \frac{\left(\beta^{5} + 5 \beta^{4} + 4 \beta^{3} - 8 \beta^{2} - 4 \beta + 4\right)}{3 \beta^{5} + 2 \beta^{4} + 6 \beta^{3} + 4 \beta^{2} + 3 \beta + 2} \sigma_0^2
    \\
    &\geq 0.
\end{aligned}
\end{align}
The above inequality follows from the fact that the factor as a function of $\beta$ attains its minimum over $\beta \geq 0$ at $\sqrt{3}-1$, where this function vanishes. This inequality is expected as \PPP{} must pay a price for not knowing $\theta$ exactly. Moreover, the difference disappears when the variance of $\theta$ is zero. {We remark that affine mean feedback policies may not constitute an optimal affine incentive policy from \PPP{}'s standpoint. One can consider a more general affine policy of the form $\gamma(v)=L_1v+L_2$ and optimize over parameters $L_1,L_2$. While this route does not yield closed-form solution, a numerical comparison of \PPP{}'s performance with such policies and that with our mean-feedback policies is left for a future endeavor.}

Restricting attention to affine mean feedback strategies for \PPP{}, we now consider $\mathfrak{G}_3$, where \PPP{} allows \AAA{} to disclose information before committing to an incentive policy. Precisely, \PPP{} observes a signal $s$ sampled according to a signaling scheme $\pi$ chosen by \AAA{}. Restrict attention to signaling via an additive Gaussian noise channel, i.e., $s = \theta + w$, where $w \sim \mathcal{N}( 0, \sigma_w^2)$. In essence, \PPP{} only chooses $\sigma_w$ to fix the signaling mechanism. 

Given this signaling scheme, the posterior belief upon observing $s$ is given by 
\begin{align}
    \mu_s(s, \pi) = \mathcal{N}\left(\underbrace{\frac{\sigma_w^2 z_0+\sigma_0^2s}{\sigma_0^2+\sigma_w^2}}_{:=z_s},\underbrace{\frac{\sigma_0^2\sigma_w^2}{\sigma_0^2+\sigma_w^2}}_{:=\sigma_s^2}\right).
    \label{eq:QG.posterior}
\end{align}
With this updated belief, \PPP{} implements an affine mean feedback policy of the form \eqref{eq: affine_feedback} with $z_0$ replaced by $z_s$ to which \AAA{} responds with $v^\star(\theta)$ in \eqref{eq: vstar} with again $z_0$ replaced by $z_s$. Substituting the obtained affine mean feedback policy $\gamma(v)$ and $v^\star(\theta)$ in $c_A$ then yields an expression quadratic in $\theta, z_s$. \AAA{} must compute its expectation over $\theta \sim \mu_0$ and $s \sim \pi(\cdot|\theta)$, where $z_s = \E[\theta\mid s]$. Utilizing the relation,
\begin{gather}
    \E[\theta z_s] = \E[[\theta z_s \mid s ]] = \E[z_s^2] = z_0^2 + \frac{\sigma_0^4}{\sigma_0^2 + \sigma_w^2}
\end{gather}
and $\E[\theta^2] = z_0^2 + \sigma_0^2$, 
the expected cost faced by \AAA{} in $\mathfrak{G}_3$ becomes
\begin{align}
\begin{aligned}
    J_{A,3}^\star(\mu_0, \pi)
    = \left( \frac{\sigma_0^4}{\sigma_0^2 + \sigma_w^2} \right) f(\beta) + \eta,
\end{aligned}
\label{eq:JA3.QG}
\end{align}
where $\eta$ does not depend on $\sigma_w$ and
\begin{align}
    f(\beta):=
    \frac{4  \left(\beta^{2} + 1\right)}{9 \beta^{2} + 12 \beta + 4}   -  \frac{\beta^{2}  }{\beta^{2} + 1}.
\end{align}

Thus, \AAA{} either chooses $\sigma_w$ to be zero or infinity, corresponding to fully revealing $\theta$ or not revealing anything about $\theta$, depending on the sign of $f(\beta)$, a continuous function of $\beta$. When $f(\beta) > 0$, set $\sigma_w = \infty$ and choose $\sigma_w = 0$, otherwise. Notice that $
\lim_{\beta \to 0}f(\beta) = 1$
and $
\lim_{\beta \to \infty}f(\beta) = -5/9$, implying that the choice of revelation depends on the value of $\beta$. Such a result is reminiscent of pure persuasion in scalar QG games studied in \cite{tamura2018bayesian}. Hence, $J_{P,3}^\star$ is either $J_{P,2}^\star$ or $J_{P,1}^\star$ depending on no revelation or full revelation, respectively. {Since $J_{P,1}^\star$ is the team optimal cost for \PPP{} in QG games, when full revelation is optimal for \AAA{}, affine mean feedback strategies indeed are optimal for \PPP{}. When \AAA{} chooses not to disclose any information under mean feedback policies, however, there is scope for improvement with optimal affine or possibly nonlinear incentive policies--a topic we are keen to study in future work.}

Finally, consider $\mathfrak{G}_4$ where \PPP{} can pay to construct a channel to acquire information about $\theta$. Specifically, let the channel $\pi$ be of additive white Gaussian type that yields an observation $s = \theta + w$, where $w \sim \mathcal{N}(0, \sigma_w^2)$. The posterior distribution, contingent on the observation $s$, is  $\mu_s(s, \pi)$, described in \eqref{eq:QG.posterior}. With the affine mean feedback policy in \eqref{eq: affine_feedback}, \PPP{} faces a cost of the form \eqref{eq:G2.A.cost.QG} with $(z_0, \sigma_0^2)$ replaced by $(z_s, \sigma_s^2)$. Call this cost $J_{P,2}^\star(\mu_s(s,\pi))$. \PPP{} must pay for the information channel she wishes to use. Similar to Section~\ref{Sec: MatrixGames}, consider the channel cost to be equal to the decrease in entropy from a reference normal distribution $\mu'_0$, which, for this case, we consider to be $\mathcal{N}(0,1)$. This implies, any additive Gaussian noise channel maps the reference distribution to $ \mathcal{N}\left(\frac{\sigma_w^2s}{1+\sigma_w^2},\frac{\sigma_w^2}{1+\sigma_w^2}\right)$, where $s \sim \theta + \mathcal{N}(0,\sigma_w^2)$. Consequently, \PPP{} obtains the cost,
\begin{align*}
    J_{P,4}^\star (\mu_0, \pi)
    &=\E[J_{P,2}^\star(\mu_s(s,\pi))] + \frac{\kappa}{2}\log(1+\sigma_w^{-2})
    \\
    &=\frac{2\beta}{3\beta + 2} \left( z_0^2+\frac{\sigma_0^4}{\sigma_0^2+\sigma_w^2} \right)
    + \frac{\kappa}{2}\log(1+\sigma_w^{-2})
    \\
    & \quad + \frac{ \beta^{4} + \beta^{3} + 2 \beta^{2} - 4 \beta + 2}{\beta^{4} + 2 \beta^{2} + 1} \left(\frac{\sigma_0^{2}\sigma_w^2}{\sigma_0^2+\sigma_w^2}\right).
\end{align*}
In general, this is challenging to optimize. We plot this cost as a function of $\sigma_w$ for various values of $\beta$ in Figure~\ref{fig:g4_principal1} that reveals cases where noisy monitoring is optimal. For this figure, we chose $z_0=1,\sigma_0=2$. Notice that for $\beta = 1$, $f(\beta) < 0$ in \eqref{eq:JA3.QG}, implying that when left to \AAA{}, the latter will choose $\sigma_w = 0$ (full revelation), implying that \PPP{} will attain the cost in $\mathfrak{G}_1$. However, when \PPP{} invests in its information channel, her optimal choice from Figure \ref{fig:g4_principal1} will yield a noisy estimate of $\theta$ leading to a higher cost in $\mathfrak{G}_4$ than in $\mathfrak{G}_3$. As a result, \PPP{} stands to gain by allowing \AAA{} to signal. With $\beta = 0.5$, however, $f(\beta) > 0$, and \AAA{} will choose to garble information about the state via $\sigma_w = \infty$. Figure \ref{fig:g4_principal1} implies the optimality of a noisy estimate of $\theta$ to be optimal in $\mathfrak{G}_4$, leading to a better equilibrium cost. Thus, in this case,  \PPP{} must construct a channel rather than relying on \AAA{}. {Finally, as cost of information acquisition $(\kappa)$ increases, \PPP{} chooses not to acquire information, per Figure \ref{fig:g4_principal1}.}

\begin{figure}[h]
    \centering
    \includegraphics[scale = 0.4]{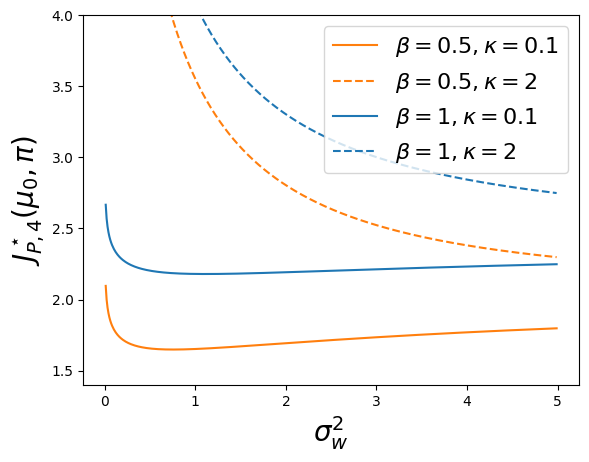}
    \caption{Variation of \PPP's cost in $\mathfrak{G}_4$ for various values of $\sigma_w^2$}
    \label{fig:g4_principal1}
\end{figure}

\section{Conclusions}
\label{sec:conc}

In this paper, we have formulated and studied an incentive design problem between a principal (\PPP{}) and an agent (\AAA{}), where we have endowed \AAA{} with informational advantage. Specifically, we have allowed \PPP{} to know only the prior distribution of a cost-relevant random variable (the state), while letting \AAA{} know the realization of the state. We have studied two types of information acquisition mechanisms. Namely, in one variant, \PPP{} allowed \AAA{} to reveal information about the state prior to the interaction, and in the other, invested in a channel to gain information about the state. We have studied these games within two setups--matrix games and quadratic Gaussian games. Our results reveal the trade-offs  between these two types of information acquisition techniques. {One direction for future work is to consider more general classes of policies for \PPP{} and study a broader class of QG games.} Another direction for future research concerns studying the same question in the context of repeated and dynamic games, where the state itself might evolve, \PPP{} can update beliefs over time, and \AAA{} can shape these beliefs.


\bibliographystyle{alpha}
\bibliography{references}


\end{document}